\theoremstyle{plain}
\newtheorem{theorem}{Theorem}[section]
\theoremstyle{definition}
\newtheorem{definition}[theorem]{Definition}
\newtheorem{example}[theorem]{Example}
\newtheorem{remark}[theorem]{Remark}
\newcommand{\pure}{{(0)}} 
\newcommand{\kl}{{(1)}} 
\newcommand{\cst}{{(1)}} 
\newcommand{\acc}{{(1)}} 
\newcommand{\cokl}{{(2)}} 
\newcommand{\modi}{{(2)}} 
\newcommand{\gen}{{(g)}} 
\newcommand{\dec}{{(d)}} 
\newcommand{\decp}{{(d')}} 
\newcommand{\decpp}{{(d'')}}  
\newcommand{\done}{{(d_1)}}  
\newcommand{\dtwo}{{(d_2)}}  
\newcommand{\eqs}{\cong} 
\newcommand{\eqw}{\sim}
\newcommand{\geqaux}{\gg}
\newcommand{\leqaux}{\ll}
\newcommand{\id}{\mathit{id}}
\newcommand{\cC}{\mathbf{C}} 
\newcommand{\cD}{\mathcal{D}} 
\newcommand{\cE}{\mathcal{E}}
\newcommand{\pr}{\mathit{pr}} 
\newcommand{\copr}{\mathit{in}} 
\newcommand{\pair}[1]{\langle #1 \rangle} 
\newcommand{\lpair}[1]{\langle #1 \rangle_l} 
\newcommand{\rpair}[1]{\langle #1 \rangle_r} 
\newcommand{\pa}{\langle \; \rangle} 
\newcommand{\copair}[1]{[ #1 ]} 
\newcommand{\lcopair}[1]{[ #1 ]_l} 
\newcommand{\copa}{[\;]} 
\newcommand{\eps}{\varepsilon} 
\newcommand{\To}{\Rightarrow}
\newcommand{\eqn}{\mathit{eq}}
\newcommand{\mon}{\mathit{mon}}
\newcommand{\exc}{\mathit{exc}}
\newcommand{\comon}{\mathit{comon}}
\newcommand{\sta}{\mathit{st}}
\newcommand{\empt}{\mathbb{0}}
\newcommand{\unit}{\mathbb{1}}
\newcommand{\tagg}{\mathtt{tag}}
\newcommand{\untag}{\mathtt{untag}}
\newcommand{\throw}{\mathtt{throw}}
\newcommand{\try}{\mathtt{try}}
\newcommand{\catch}{\mathtt{catch}}
\newcommand{\lookup}{\mathtt{lookup}}
\newcommand{\update}{\mathtt{update}}
\newcommand{\squad}{\;\;}
\newcommand{\stimes}{\!\times\!}
\newcommand{\splus}{\!+\!}
\newcommand{\scolon}{\!\colon\!}
\newcommand{\scirc}{\!\circ\!}
\newcommand{\sto}{\!\to\!}
\newcommand{\Ename}{\mathit{Exn}}
\newcommand{\Loc}{\mathit{Loc}}
\newcommand{\abr}{\boxed{\mathit{abrupt}}}
\newcommand{\nor}{\boxed{\mathit{normal}}}
\newcommand{\abrupt}{\mathit{abrupt}}
\newcommand{\normal}{\mathit{normal}}
\newcommand{\M}{M} 
\newcommand{\MM}{D}  
\newcommand{\coM}{D} 
\newcommand{\isexc}{\mathit{exc?}} 
\newcommand{\Log}{\mathcal{L}} 
\newcommand{\ttr}{\mathit{true}}
\newcommand{\ff}{\mathit{false}}
\newcommand{\rul}[1]{\textrm{(#1)}}
\newcommand{\catchn}[4]{\mathtt{catch}\,(#1\!\To\! #2|\dots|#3\!\To\! #4)}
\newcommand{\all}{\mathtt{all}}
\title{Breaking a monad-comonad symmetry between computational effects}
\date{February 4., 2014}
\author{Jean-Guillaume Dumas \\ 
  Laboratoire Jean Kuntzmann, University of Grenoble, France
        \and Dominique Duval \\ 
  Laboratoire Jean Kuntzmann, University of Grenoble, France
  \and Jean-Claude Reynaud \\
  Reynaud Consulting, Claix, France
}
\begin{document}

\maketitle     

\begin{abstract}
Computational effects may often be interpreted in the Kleisli category 
of a monad or in the coKleisli category of a comonad. 
The duality between monads and comonads corresponds, in general,
to a symmetry between construction and observation, 
for instance between raising an exception and looking up a state. 
Thanks to the properties of adjunction one may go one step further: 
the coKleisli-on-Kleisli category of a monad provides 
a kind of observation with respect to a given construction, 
while dually the Kleisli-on-coKleisli category of a comonad provides 
a kind of construction with respect to a given observation.
In the previous examples this gives rise to catching an exception 
and updating a state. 
However, the interpretation of computational effects 
is usually based on a category which is not self-dual, 
like the category of sets. This leads to a breaking of 
the monad-comonad duality. For instance, in a distributive category 
the state effect has much better properties than the exception effect. 
This remark provides a novel point of view on
the usual mechanism for handling exceptions. 
The aim of this paper is to build an equational semantics 
for handling exceptions based on the coKleisli-on-Kleisli category 
of the monad of exceptions. 
We focus on n-ary functions and conditionals. 
We propose a programmer's language for exceptions 
and we prove that it has the required behaviour 
with respect to n-ary functions and conditionals. 

{\bf Keywords.} 
Computational effects; 
monads and comonads; 
duality;
decorated logics.
\end{abstract}

\section{Introduction} 

Categorical semantics for programming languages 
interprets types as objects and terms as morphisms;
in this setting, substitution is composition, 
categorical products are used for dealing with $n$-ary operations
and coproducts for conditionals. 
The most famous result in this direction is the Curry-Howard-Lambek 
correspondence which relates 
intuitionistic logic, simply typed lambda calculus 
and cartesian closed categories.  
The \emph{algebraic effects} challenge is the search for 
some extension of this correspondence 
to a categorical framework corresponding to 
\emph{computational effects}, which means, roughly, 
to non-functional features of programming languages. 
Moggi proposed to use the categorical notion of \emph{monad} 
for this purpose \cite{Moggi91},
then monads were popularized by Wadler \cite{Wa92} 
and implemented in Haskell and F$\sharp$. 
Related categorical notions like 
Freyd categories, arrows, Lawvere theories, were also proposed
\cite{PR97,Hugues00,PP02,HPP06}. 
Moreover, the dual notion of \emph{comonad} can also 
be used for dealing with computational effects 
\cite{UV08,DDFR12-state,POM13}. 
This gives rise to a three-tier classification of terms 
which is similar to the one in \cite{Wa05}. 
Some effects, like the state effect, can be seen 
both as monads and as comonads \cite{Moggi91,DDFR12-state}. 
Other effects, like the handling of exceptions, 
do not fit easily in the monad approach \cite{PP03,PP09,BP13}.
However the use of the co-Eilenberg-Moore-on-Eilenberg-Moore category 
of the monad of exceptions was successfully used by Levy 
for adapting the monad approach to the handling of exceptions \cite{Le06};
in this paper we follow a similar line. 

The aim of this paper is to build an equational semantics 
for handling exceptions based on the coKleisli-on-Kleisli category 
of the monad of exceptions. 
We focus on $n$-ary functions and conditionals 
because they correspond to the dual categorical notions of 
products and coproducts, although their behaviour 
with respect to effects is quite different: in general 
there is no ambiguity in using conditionals with effects, 
whereas the value of an expression involving $n$-ary functions 
may depend on the order of evaluation of the arguments. 
The equational semantics we use are \emph{decorated}: 
the terms and equations are annotated,
in a way similar to the type-and-effect systems \cite{LucassenGifford88}, 
in order to classify them according to their interaction with the effect. 
Typically, for exceptions, terms are classified as pure,
propagators (which must propagate exceptions) 
and catchers (which may recover from exceptions);
thus, there is no need for an explicit ``type of exceptions'', 
and we get a clear dictinction between a coproduct type $A+B$ in the syntax 
and a coproduct $A+E$ (where $E$ is the ``object of exceptions'')
which may be used for interpreting terms involving the type $A$.
In the equational semantics we use coproduct types $A+B$
but we never use coproducts involving $E$.
In order to get an equational semantics for exceptions, 
we start from two facts:
first, the Kleisli-on-coKleisli category of a comonad can be used 
for building an equational semantics for states \cite{DDFR12-state};
secondly, there is a duality between the denotational semantics 
of the state effect and the denotational semantics 
of the core operations for the exception effect \cite{DDFR12-dual}. 
We adapt this duality to the equational level,
then we build the programmer's language for exceptions 
by adding some control to the core operations.
Finally we propose a programmer's language for exceptions, 
built from categorical products and coproducts,  
and we prove that it satisfies equations providing the required behaviour 
(as explained above) with respect to $n$-ary functions and conditionals. 
The equational semantics for states is implemented in Coq \cite{DDEP13-coq}
and the one for exceptions is in progress. 

The duality between monads and comonads corresponds, in general,
to a symmetry between construction and observation:
raising an exception is a construction, 
reading the value of a location is an observation. 
As recalled in Section~\ref{sec:pre},  
thanks to the properties of adjunction one may go one step further: 
the coKleisli-on-Kleisli category of a monad provides 
a kind of observation with respect to a given construction, 
while dually the Kleisli-on-coKleisli category of a comonad provides 
a kind of construction with respect to a given observation.
In the previous examples this gives rise to catching an exception 
and updating a state, respectively.
The  coKleisli-on-Kleisli category of a monad,
as well as the Kleisli-on-coKleisli category of a comonad, 
provide a classification of terms and equations. 
In Section~\ref{sec:dual} we define variants of the equational logic 
for dealing with this classification. 
These variants are called {\em decorated logics}: 
there is a decorated logic $\Log_\mon$ for a monad 
and dually a decorated logic $\Log_\comon$ for a comonad. 
When the monad is the exception monad, 
we can add to the decorated logic $\Log_\mon$ the {\em core} operations 
for exceptions: 
the \emph{tagging} operations 
for encapsulating an ordinary value into an exception, 
and the \emph{untagging} operations 
for recovering the ordinary value which has been previously 
encapsulated in an exception. 
Dually, When the comonad is the state comonad, 
we can add the basic operations for states: 
the \emph{lookup} operations which observe the state 
and the \emph{update} operations which modify it.
In Section~\ref{sec:break} we assume that the category $\cC$ 
has some distributivity or extensivity property,
like for instance the category of sets. 
This breaks the monad-comonad duality: 
the state effect gets better properties with respect to coproducts, while 
the exception effect does not get better properties with respect to products. 
On the comonad side, 
we check that the side-effects due to the evolution of state 
do not perturb the case-distinction features, 
and we provide decorated equations for imposing an order on the 
interpretation of the arguments of multivariate functions.
On the monad side,  
we check that the properties of operations for catching exceptions 
are quite poor. This is circumvented by encapsulating 
the catching operations in {\em try-catch} blocks. 
This provides a novel point of view on the formalization 
of the usual mechanism for handling exceptions. 
We get a programmer's language for exceptions
which has the required behaviour 
with respect to $n$-ary functions and conditionals. 

\section{Preliminaries} 
\label{sec:pre}

We present some well-known results 
about monads and comonads in Section~\ref{ssec:kleisli}) 
and (independently) about equational logic with conditionals 
in Section~\ref{ssec:eqn}. 

\subsection{CoKleisli-on-Kleisli category} 
\label{ssec:kleisli}

This Section relies on \cite{ML}. 
A similar construction is used in \cite{Le06,Jacobs13}, 
with ``Kleisli'' replaced by ``Eilenberg-Moore''. 
Let $\cC$ be a category and $(\M,\eta,\mu)$ a monad on $\cC$. 
Let $\cC^\cst$ be the Kleisli category of this monad  
and $F_0\dashv G_0\colon \cC^\kl \to \cC $ the corresponding adjunction. 
Then $\M=G_0\circ F_0\colon \cC \to \cC$.
Let $\MM=F_0\circ G_0\colon \cC^\kl \to \cC^\kl$,
it is the endofunctor of a comonad $(\MM,\eps,\delta)$ on $\cC^\kl$.
Let $\cC^\cokl$ be the coKleisli category of this comonad 
and $F_1\dashv G_1\colon \cC^\kl \to \cC^\cokl $ the corresponding adjunction. 
Then $\MM=F_1\circ G_1\colon \cC^\kl \to \cC^\kl$. 
In such a situation, there is a unique functor $K\colon \cC^\cokl \to \cC$ 
such that $K\circ G_1=G_0$ and $F_0\circ K=F_1$. 
$$\xymatrix@C=6pc{
\cC \ar@(ul,ur)^{\M} \ar@<+.5ex>@/^/[r]^{F_0} &
\cC^\kl \ar@(ul,ur)^{\MM} \ar@<+.5ex>@/^/[l]^{G_0} \ar@ {}[l]|{\bot} 
  \ar@<+.5ex>@/^/[r]^{G_1} \ar@ {}[r]|{\top} &
\cC^\cokl \ar@<+.5ex>@/^/[l]^{F_1} \ar@<+1ex>@/^4ex/[ll]^(.3){K} \\
}$$
The three categories $\cC$, $\cC^\kl$ and $\cC^\cokl$ have the same objects.
There is a morphism $g^\kl\colon A\to B$ in $\cC^\kl$ 
for each morphism $g_1\colon A\to \M B$ in $\cC$, 
and there is a morphism $h^\cokl\colon A\to B$ in $\cC^\cokl$ 
for each morphism $h_2\colon \M A\to \M B$ in $\cC$.
The  functor $K$ maps $A$ to $\M A$ and 
$h^\cokl\colon A\to B$ to $h_2\colon \M A\to \M B$. 
We are mainly interested in the functors $F_0$ and $G_1$.
They are the identity on objects, 
$F_0$ maps $f_0\colon A\to B$ in $\cC$
to $f^\kl\colon A\to B$ in $\cC^\kl$ corresponding to 
$f_1=\eta_B\circ f_0\colon A\to \M B$ in $\cC$, and 
$G_1$ maps $g^\kl\colon A\to B$ in $\cC^\kl$ corresponding to 
$g_1\colon A\to \M B$ in $\cC$
to $g^\cokl\colon A\to B$ in $\cC^\cokl$ corresponding to 
$g_2=\mu_B\circ \M g_1\colon \M A\to \M B$ in $\cC$. 
Thus, $G_1\circ F_0$ maps $f_0\colon A\to B$ in $\cC$
to $f^\cokl\colon A\to B$ in $\cC^\cokl$ corresponding to 
$f_2= \M f_0\colon \M A\to \M B$ in $\cC$. 

\subsection{Equational logic with conditionals}
\label{ssec:eqn}

We choose a categorical presentation of logic as for instance in \cite{Pitts},
in a {\em bicartesian} category 
(i.e., a category with finite products and coproducts). 
In a functional programming language, from this point of view, 
types are objects, terms are morphisms and substitution is composition.
Each term $f$ has a source type $A$ and a target type $B$,
this is denoted $f\colon A\to B$. 
A term has precisely one source type,  
which can be a product type or the unit type $\unit$. 
A $n$-ary operation $f:A_1,\dots,A_n\to B$ 
corresponds to a morphism $f:A_1\times\dots\times A_n\to B$ 
(this holds for every $n\geq0$, with $f:\unit\to B$ when $n=0$).
Typically, when $n=2$, the substitution of terms 
$a_1\colon A\to A_1,a_2\colon A\to A_2$ for the variables $x_1,x_2$ 
in $f(x_1,x_2)$ is the composition of the pair 
$\pair{a_1,a_2}\colon A\to A_1\times A_2$ with $f:A_1\times A_2\to B$. 
$$ \xymatrix@R=.5pc@C=3pc{
&& A_1 &\\
A \ar[rru]^{a_1} \ar[rrd]_{a_2} \ar[rr]|{\,\pair{a_1,a_2}\,} && 
  A_1\times A_2 \ar[u] \ar[d] \ar@{}[ull]|(.3){=} \ar@{}[dll]|(.3){=} 
    \ar[r]^f & B \\
&& A_2 &\\
}$$ 
Conditionals corresponds to copairs: 
a command like {\it if $b$ then $f$ else $g$} corresponds to 
the morphism $\copair{f|g}\circ b$, 
where $\copair{f|g}$ is the \emph{copair} of $f\colon \unit \to B$ 
and $g\colon \unit \to B$, i.e., 
the unique morphism $h\colon \unit+\unit \to B$ such that 
$h\circ \ttr = f$ and $h\circ \ff = g$. 
$$ \xymatrix@R=.5pc@C=3pc{
& \unit \ar[d]_{\ttr} \ar@<.5ex>[rrd]^{f} && \\
A \ar[r]^{b} & \unit+\unit \ar[rr]|{\,\copair{f|g}\,}
    \ar@{}[urr]|(.3){=} \ar@{}[drr]|(.3){=} && B \\ 
& \unit \ar[u]^{\ff} \ar@<-.5ex>[rru]_{g} && \\
}$$ 
The grammar and the rules of the equational logic with conditionals 
are recalled in Fig.~\ref{fig:log-eqn}. 
For short, rules with the same premisses may be grouped together: 
$\frac{H_1\dots H_n}{C_1},...,\frac{H_1\dots H_n}{C_p}$ 
may be written $\frac{H_1\dots H_n}{C_1\dots C_p}$. 
\begin{figure}[!ht]   
\renewcommand{\arraystretch}{1.7}
$$ \begin{array}{|l|} 
\hline
\mbox{Grammar} \\ 
\quad \textrm{Types: } t::= 
 A\mid B\mid \dots\mid 
 t\times t\mid \unit\mid
 t+t\mid\empt  \\
\quad \textrm{Terms: } f::=   \id_t\mid f\circ f\mid  
 \pair{f,f} \mid \pr_{t,t,1}\mid \pr_{t,t,2}\mid \pa_t \mid
 \copair{f|f}\mid \copr_{t,t,1}\mid\copr_{t,t,2}\mid\copa_t \quad \\
\quad \textrm{Equations: } e::=  f\equiv f   \\  
\hline
\mbox{Equivalence rules} \\ 
\quad \rul{refl} \quad 
  \dfrac{f}{f \equiv f} \qquad
\rul{sym} \quad 
  \dfrac{f \equiv g}{g \equiv f}  \qquad
\rul{trans} \quad 
  \dfrac{f \equiv g \squad g \equiv h}{f \equiv h}  \\ 
\hline
\mbox{Categorical rules} \\ 
\quad \rul{id} \quad 
  \dfrac{A}{\id_A\colon A\to A } \qquad 
\rul{comp} \quad 
  \dfrac{f\colon A\to B \quad g\colon B\to C}
    {(g\circ f) \colon A\to C}  \\
\quad \rul{id-source} \quad 
  \dfrac{f\colon A\to B}{f\circ \id_A \equiv f} \qquad 
\rul{id-target} \quad 
  \dfrac{f\colon A\to B}{\id_B\circ f \equiv f} \\
\quad \rul{assoc} \quad 
  \dfrac{f\colon A\to B \squad g\colon B\to C \squad h\colon C\to D}
  {h\circ (g\circ f) \equiv (h\circ g)\circ f}  \\
\hline
\mbox{Congruence rules} \\ 
\quad \rul{repl} \quad 
  \dfrac{f_1\equiv f_2\colon A\to B \squad g\colon B\to C}
    {g\circ f_1 \equiv g\circ f_2 }  \qquad
\rul{subs} \quad 
  \dfrac{f\colon A\to B \squad g_1\equiv g_2\colon B\to C}
    {g_1 \circ f \equiv g_2\circ f } \\
\hline
\mbox{Product rules} \\ 
\quad \rul{prod} \quad 
  \dfrac{B_1 \quad B_2 }
    {\pr_1\colon B_1\stimes B_2 \to B_1 \quad 
    \pr_2\colon B_1\stimes B_2 \to B_2}  \\ 
\quad \rul{pair}  \quad 
  \dfrac{f_1\colon  A \to B_1 \quad f_2\colon  A \to B_2}
    {\pair{f_1,f_2}\colon  A\to B_1\stimes B_2 \quad
    \pr_1\circ\pair{f_1,f_2} \equiv f_1 \quad 
    \pr_2\circ\pair{f_1,f_2} \equiv f_2 }  \\
\quad \rul{pair-u} \quad 
  \dfrac{f_1\scolon A\!\sto\! B_1 \squad
    f_2\scolon A\!\sto\! B_2 \squad 
    g\scolon A\!\sto\! B_1\stimes B_2 \squad 
    \pr_1\circ g\equiv f_1 \squad 
    \pr_2\circ g\equiv f_2 }
    {g \equiv \pair{f_1,f_2}} \\
\quad \rul{final} \quad 
  \dfrac{A}{\pa_A\colon A\to \unit} \qquad
\rul{final-u} \quad 
  \dfrac{f\colon A\to \unit}{f \equiv \pa_A} \\
\hline
\mbox{Coproduct rules} \\ 
\quad \rul{coprod} \quad 
  \dfrac{A_1 \quad A_2}
    {\copr_1\colon A_1\to A_1\splus A_2 \quad 
    \copr_2\colon A_2\to A_1\splus A_2 } \\ 
\quad \rul{copair} \quad 
  \dfrac{f_1\colon  A_1 \to B \quad f_2\colon  A_2 \to B}
    {\copair{f_1|f_2}\colon  A_1\splus A_2 \to B \quad
    \copair{f_1|f_2} \circ \copr_1 \equiv  f_1 \quad 
    \copair{f_1|f_2} \circ \copr_2 \equiv  f_2  } \\
\quad \rul{copair-u} \quad 
  \dfrac{f_1\scolon A_1 \sto B \squad 
    f_2\scolon A_2 \sto B \squad 
    g\scolon A_1\!\splus\! A_2 \sto B \squad 
    g\circ \copr_1 \equiv f_1 \squad 
    g\circ \copr_2 \equiv f_2 }
    {g \equiv \copair{f_1|f_2}}  \\
\quad \rul{initial} \quad 
  \dfrac{B}{\copa_B\colon \empt\to B} \qquad
\rul{initial-u} \quad 
  \dfrac{f\colon \empt\to B}{f \equiv \copa_B} \\
\hline 
\end{array}$$
\renewcommand{\arraystretch}{1}
\caption{$\Log_\eqn$: the equational logic with conditionals} 
\label{fig:log-eqn} 
\end{figure}

\section{The duality} 
\label{sec:dual}

In Sections~\ref{ssec:monad} and~\ref{ssec:comonad} we define 
decorated logics $\Log_\mon$ and $\Log_\comon$, 
together with their interpretation in 
a category with a monad and with a comonad, respectively. 
Then in Sections~\ref{ssec:excore} and~\ref{ssec:state}  
we extend $\Log_\mon$ and $\Log_\comon$ into 
$\Log_\exc$ and $\Log_\sta$ which are dedicated 
to the monad of exception and to the comonad of states, respectively. 
The interpretations of these logics provide the duality 
between the denotational semantics of states and exceptions 
mentioned in \cite{DDFR12-dual}.
All these logics are called \emph{decorated logics}  
because their grammar and inference rules are essentially the 
grammar and inference rules for 
the logic $\Log_\eqn$ (from Section~\ref{ssec:eqn})   
together with \emph{decorations} for the terms and for the equations.
The decorations for the terms are similar to the \emph{annotations} 
of the types and effects systems \cite{LucassenGifford88}. 
Decorated logics are introduced in \cite{DD10-dialog} 
in an abstract categorical framework which will not be explicitly used 
in this paper. 

\subsection{A decorated logic for a monad} 
\label{ssec:monad}

In the logic $\Log_\mon$ for monads, 
each term has a decoration which is denoted as a superscript
$\pure$, $\cst$ or $\modi$:
a term is \emph{pure} when its decoration is $\pure$,
it is a \emph{constructor} when its decoration is $\cst$
and a \emph{modifier} when its decoration is $\modi$. 
Each equation has a decoration which is denoted by 
replacing the symbol $\equiv$ either by $\eqs$ or by $\eqw$:
an equation with $\eqs$ is called \emph{strong},
with $\eqw$ it is called \emph{weak}. 
In order to give a meaning to the logic $\Log_\mon$, 
let us consider a bicartesian category $\cC$ with a monad $(\M,\eta,\mu)$.
The categories $\cC^\pure=\cC$, $\cC^\cst$, $\cC^\modi$ and the 
functors $F_0:\cC^\pure \to \cC^\cst$ and $G_1: \cC^\cst\to \cC^\modi$ 
are defined as in Section~\ref{ssec:kleisli}.
Then we get an interpretation $\cC_\M$ of 
the grammar and the conversion rules of $\Log_\mon$ as follows. 
\begin{itemize}
\item A type $A$ is interpreted as an object $A$ of $\cC$.
\item A term $f^\dec\colon A\!\to\! B$ is interpreted as a
morphism $f\colon A\!\to\! B$ in $\cC^\modi$;
if $d=0$ then $f$ must be in the image of $\cC^\pure$ by $G_1\circ F_0$,
and if $d=1$ then $f$ must be in the image of $\cC^\cst$ by $G_1$. 
This means that all terms are interpreted as morphisms of $\cC$: 
a pure term $f^\pure\colon A\!\to\! B$ as a morphism 
$f_0\colon A\!\to\! B$ in~$\cC$;
a constructor $g^\cst\colon\! A\!\to\! B$ as a morphism 
$g_1\colon\! A\!\to\! \M B$ in~$\cC$;
and a modifier $h^\modi\colon A\!\to\! B$ as a morphism 
$h_2\colon \M A\!\to\! \M B$ in~$\cC$.
\item A strong equation $f^\dec\eqs g^\dec\colon A\to B$ 
is interpreted as an equality 
$f= g\colon A\to B$ in~$\cC^\modi$, i.e., 
as an equality $f_2=g_2\colon \M A\to \M B$ in~$\cC$. 
\item A weak equation $f^\dec\eqw g^\dec\colon A\to B$ 
is interpreted as an equality 
$f_2\circ\eta_A=g_2\circ\eta_A\colon A\to \M B$ in $\cC$.
\end{itemize}
\begin{example}
\label{example:monad-list}
Let us consider the monad of lists (or words),
and its interpretation in the category of sets. 
Then a term $f:A\to B$ is interpreted as a code,
i.e., as a map $f:A^*\to B^*$
from the words on $A$ to the words on $B$. 
The classification of the terms provided by the decorations 
corresponds to a well-known classification of the codes: 
if $f$ is constructor then for each word $u=x_1\dots x_n$ on $A$
the word $f(u)=f(x_1)\dots f(x_n)$ is the concatenation of 
the images of the letters in $u$,
and if $f$ is pure then in addition for each letter $x$ in $A$
the word $f(x)$ is a letter in $B$. 
\end{example}
The inference rules of $\Log_\mon$ 
are decorated versions of the rules of the equational logic with conditionals. 
The main rules are given in Fig.~\ref{fig:log-mon},
and all rules in Appendix~\ref{app:logic}. 
When a decoration is clear from the context, it is often omitted.  
\begin{itemize}
\item 
The conversion rules are decorated versions of rules of the form~$\frac{H}{H}$. 
\item 
All rules of $\Log_\eqn$ are decorated with $\pure$ for terms and $\eqs$ for 
equations: the pure terms with the strong equations form 
a sublogic of $\Log_\mon$, which is the same as $\Log_\eqn$. 
Thus, the structural operations like $\id$, $\pr$, $\pa$, $\copr$, $\copa$,
are pure.
\item 
The congruence rules for equations are decorated with all decorations for terms 
and for equations, with one notable exception:
the substitution rule holds only when the substituted term is pure.
\item 
The categorical rules hold for all decorations and
the decoration of a composed term is 
the maximum of the decorations of its components. 
\item 
The product rules are decorated only as pure. 
\item 
For the coproduct rules, 
the terms in rules $\rul{copair}$ and $\rul{copair-u}$ 
can be decorated as pure or constructors, 
and the decoration of a copair is 
the maximum of the decorations of its components. 
Thus, conditionals can be built from constructors, 
but not from modifiers. 
The decorated rule $\rul{initial-u}$ states that~$\copa_B$ is the unique 
term from $\empt$ to $B$, up to weak equality.
\end{itemize}
\begin{figure}[!ht]   
\renewcommand{\arraystretch}{1.7}
$$ \begin{array}{|l|} 
\hline
\mbox{Conversion rules} \\ 
\quad \rul{pure-acc} \quad \dfrac{f^\pure}{f^\acc} \quad 
  \rul{acc-mod} \quad \dfrac{f^\acc}{f^\modi}  \\
\quad \rul{strong-weak} \quad \dfrac{f^\dec\eqs g^\decp}{f\eqw g} \quad
  \rul{weak-strong} \quad \dfrac{f^\dec\eqw g^\decp}{f\eqs g} 
  (d,d'\leq 1) \quad \\
\hline
\mbox{Weak substitution rule} \\ 
\quad \rul{w-subs} \quad 
  \dfrac{f^\pure\colon A\to B \squad g_1^\dec\eqw g_2^\decp\colon B\to C}
    {g_1 \circ f \eqw g_2\circ f } \\
\hline
\mbox{Coproduct rules} \\ 
\quad \rul{coprod} \quad 
  \dfrac{A_1 \quad A_2}
    {\copr_1^\pure\colon A_1\to A_1\splus A_2 \quad 
    \copr_2^\pure\colon A_2\to A_1\splus A_2 } \\ 
\quad \rul{copair} \; 
  \dfrac{ f_1^\done\colon  A_1 \to B \quad 
    f_2^\dtwo\colon  A_2 \to B}
    {\copair{f_1|f_2}^{(max(d_1,d_2))} \colon A_1\splus A_2 \to B \quad
    \copair{f_1|f_2} \circ \copr_1 \eqs  f_1 \quad 
    \copair{f_1|f_2} \circ \copr_2 \eqs  f_2  } (d_1,d_2\leq1) \\
\quad \rul{copair-u} \; 
  \dfrac{f_1^\done\scolon A_1 \sto B \squad 
    f_2^\dtwo\scolon A_2 \sto B \squad 
    g^\dec\scolon A_1\!\splus\! A_2 \sto B \squad 
    g\scirc \copr_1 \!\eqs\! f_1 \squad 
    g\scirc \copr_2 \!\eqs\! f_2 }
    {g \!\eqs\! \copair{f_1|f_2}} (d_1,d_2,d\!\leq\!1) \\  
\quad \rul{initial} \quad 
  \dfrac{B}{\copa_B^\pure\colon \empt\to B} \qquad
  \rul{initial-u} \quad 
  \dfrac{f^\modi\colon \empt\to B}{f \eqw \copa_B} \\ 
\hline
\end{array}$$
\renewcommand{\arraystretch}{1}
\caption{$\Log_\mon$: some decorated rules for a monad} 
\label{fig:log-mon} 
\end{figure}
It is easy to check that these rules are satisfied by the interpretation 
$\cC_\M$ of $\Log_\mon$. 
Each $f^\pure$ may be converted to $f^\cst=F_0f^\pure$ 
and to $f^\modi=G_1F_0f^\pure$, and each $g^\cst$ to $g^\modi=G_1g^\cst$. 
Each strong equality $f= g$ 
gives rise to an equality $f_2\circ\eta_A=g_2\circ\eta_A$,
and both equalities are equivalent when $f$ and $g$ are in $\cC^\cst$.
Products and coproducts in $\Log_\mon$ 
are interpreted as products and coproducts in $\cC$.  
For instance, the pair of two constructors 
$f^\cst\colon A\to B_1$ and $g^\cst\colon A\to B_2$ 
is interpreted as the pair 
$\pair{f_1,g_1}\colon \M A\to B_1\times B_2$ in~$\cC$. 

\subsection{A decorated logic for a comonad}
\label{ssec:comonad}

The dual of the decorated logic $\Log_\mon$ for a monad 
is the decorated logic $\Log_\comon$ for a comonad. 
Thus, the grammar of $\Log_\comon$ is the same as the grammar of $\Log_\mon$,
but a term with decoration $\acc$ is now called an \emph{accessor} 
(or an \emph{observer}). 
The conversion rules are the same as those in $\Log_\mon$.
Let $\cC$ be a bicartesian category with a comonad $(\coM,\eps,\delta)$.
The categories $\cC^\pure=\cC$, $\cC^\acc$, $\cC^\modi$ and the 
functors $F_0:\cC^\pure \to \cC^\acc$ and $G_1: \cC^\acc\to \cC^\modi$ 
are defined dually to Section~\ref{ssec:kleisli}.
Then we get an interpretation $\cC_\coM$ of 
the grammar of $\Log_\comon$ as follows.
\begin{itemize}
\item A type $A$ is interpreted as an object $A$ of $\cC$.
\item A term $f^\dec\colon A\!\to\! B$ is interpreted as a
morphism $f\colon A\!\to\! B$ in $\cC^\modi$, 
which can be expressed as a morphism in $\cC$: 
a pure term $f^\pure\colon A\!\to\! B$ as a morphism 
$f_0\colon A\!\to\! B$ in~$\cC$;
an accessor $g^\acc\colon\! A\!\to\! B$ as a morphism 
$g_1\colon\! \coM A\!\to\! B$ in~$\cC$;
and a modifier $h^\modi\colon A\!\to\! B$ as a morphism 
$h_2\colon \coM A\!\to\! \coM B$ in~$\cC$.
\item A strong equation $f^\dec\eqs g^\dec\colon A\to B$ 
is interpreted as an equality 
$f_2=g_2\colon \coM A\to \coM B$ in~$\cC$. 
\item A weak equation $f^\dec\eqw g^\dec\colon A\to B$ 
is interpreted as an equality 
$\eps_B\circ f_2=\eps_B\circ g_2\colon A\to \coM B$ in~$\cC$.
\end{itemize}
The rules for $\Log_\comon$ are nearly the same 
as the corresponding rules for $\Log_\mon$, except that
for weak equations the substitution rule always holds 
while the replacement rule holds only when the replaced term is pure,
and in the rules for products and coproducts the decorations are permuted, 
see Fig.~\ref{fig:log-comon} for the  main rules.
\begin{figure}[!ht]   
\renewcommand{\arraystretch}{1.7}
$$ \begin{array}{|l|} 
\hline
\mbox{Conversion rules} \\ 
\quad \dfrac{f^\pure}{f^\acc} \qquad \dfrac{f^\acc}{f^\modi} 
  \qquad \dfrac{f^\dec\eqs g^\decp}{f\eqw g} \mbox{ for all } d,d' \qquad
\dfrac{f^\dec\eqw g^\decp}{f\eqs g} \mbox{ for all } d,d'\leq 1 \\
\hline
\mbox{Weak replacement rule} \\ 
\quad \rul{w-repl} \quad 
  \dfrac{f_1^\dec\eqw f_2^\decp\colon A\to B \squad g^\pure\colon B\to C}
    {g\circ f_1 \eqw g\circ f_2 } \\
\hline
\mbox{Product rules} \\ 
\quad \rul{prod} \quad 
  \dfrac{B_1 \quad B_2 }
    {\pr_1^\pure\colon B_1\stimes B_2 \to B_1 \quad 
    \pr_2^\pure\colon B_1\stimes B_2 \to B_2}  \\ 
\quad \rul{pair}  \; 
  \dfrac{ f_1^\done\colon  A \to B_1 \quad f_2^\dtwo\colon  A \to B_2}
    {\pair{f_1,f_2}^{(max(d_1,d_2))}\colon  A\to B_1\stimes B_2 \quad 
    \pr_1\circ\pair{f_1,f_2} \eqs f_1 \quad 
    \pr_2\circ\pair{f_1,f_2} \eqs f_2 } (d_1,d_2\leq1) \\
\quad \rul{pair-u} \;
  \dfrac{f_1^\done\scolon A\!\sto\! B_1 \squad
    f_2^\dtwo\scolon A\!\sto\! B_2 \squad 
    g^\dec\scolon A\!\sto\! B_1\stimes B_2 \squad 
    \pr_1\circ g\eqs f_1 \squad 
    \pr_2\circ g\eqs f_2 }
    {g \eqs \pair{f_1,f_2}} (d_1,d_2,d\!\leq\!1) \\
\quad \rul{final} \quad 
  \dfrac{A}{\pa_A^\pure\colon A\to \unit} \qquad
   \rul{final-u} \quad 
  \dfrac{f^\modi\colon A\to \unit}{f \eqw \pa_A} \\
\hline
\end{array}$$
\renewcommand{\arraystretch}{1}
\caption{$\Log_\comon$: some decorated rules for a comonad} 
\label{fig:log-comon} 
\end{figure}
The logic $\Log_\comon$ can be interpreted dually to $\Log_\mon$.
Let $\cC$ be a bicartesian category 
and $(\coM,\eps,\delta)$ a comonad on $\cC$.  
Then we get a model $\cC_{\coM}$ of the decorated logic $\Log_\comon$, where 
an accessor $f^\acc\colon\! A\!\to\! B$ 
is interpreted as a morphism $f_1\colon \coM A\to B$ 
in~$\cC$, 
a weak equation $f^\modi\eqw g^\modi\colon A\to B$ as an equality 
$\eps_B\circ f_2=\eps_B\circ g_2\colon \coM A\to B$ 
in $\cC$ 
and a copair of two accessors 
$f^\acc\colon A_1\to B$ and $g^\acc\colon A_2\to B$ 
as the copair 
$\copair{f_1|g_1}\colon A_1+A_2\to \coM B$ in~$\cC$. 

\subsection{A decorated logic for the monad of exceptions} 
\label{ssec:excore}

Let us assume that there is in $\cC$ 
a distinguished object $E$ called the \emph{object of exceptions}.
The \emph{monad of exceptions} on $\cC$ is the monad $(\M,\eta,\mu)$  
with endofunctor $\M X=X+E$, its unit $\eta$ is made of the coprojections 
$\eta_X\colon X\to X+E$ and its multiplication $\mu$ 
is defined by 
$\mu_X=\copair{\id_{X+E}|\copr_X}\colon (X+E)+E\to X+E$ where 
$\copr_X\colon E\to X+E$ is the coprojection. 
As in Section~\ref{ssec:monad}, 
the category $\cC$ with the monad of exceptions 
provides a model $\cC_\M$ of the decorated logic $\Log_\mon$. 
The name of the decorations can be adapted to the monad of exceptions: 
a constructor is called a \emph{propagator}: 
it may raise an exception but cannot recover from an exception, 
so that it has to propagate all exceptions;
a modifier is called a \emph{catcher}.

For this specific monad, 
it is possible to extend the logic $\Log_\mon$ as $\Log_\exc$,
called the \emph{decorated logic for exceptions}, 
so that $\cC_\M$ can be extended as a model $\cC_\exc$ of $\Log_\exc$. 
First, we get copairs of a propagator and a modifier,  
as in the first part of Fig.~\ref{fig:log-exc}
for the left copairs (the rules for the right copairs are symmetric). 
The interpretation of the left copair $\lcopair{f|g}^\modi:
A_1+A_2\to B$ of $f^\cst:A_1\to B$ and $g^\modi:A_2\to B$
is the copair $\copair{f_1|g_2}:A_1+A_2+E\to B+E$ 
of $f_1:A_1\to B+E$ and $g_2:A_2+E\to B+E$ in~$\cC$.
This is possible because $(A_1+A_2)+E$ 
is canonically isomorphic to $A_1+(A_2+E)$,
whereas for a monad generally $\M(A_1+A_2)$ is not 
isomorphic to $A_1+\M A_2$. 
For instance, the coproduct of $A \cong A+\empt$, 
with coprojections $\id_A^\pure:A\to A$ and $\copa_A^\pure:\empt\to A$, 
gives rise to the left copair $\lcopair{f|g}^\modi:A \to B$ 
of any propagator $f^\cst\colon  A \to B$
with any modifier $g^\modi\colon \empt \to B$,
which is characterized up to strong equations
by $\lcopair{f|g}\eqw f$ and $\lcopair{f|g}\eqs g$.
The construction of $\lcopair{f|g}^\modi$ and its interpretation 
can be illustrated as follows:
$$\xymatrix@R=.5pc@C=7pc{
A \ar[rd]^{f^\cst} \ar[d]_{\id^\pure} & \\
A \ar[r]|{\,\lcopair{f|g}^\modi\,} 
  \ar@{}[ur]|(.3){\eqw} \ar@{}[dr]|(.3){\eqs} & B \\
\empt \ar[u]^{\copa^\pure} \ar[ru]_{g^\modi} & \\
}
\qquad 
\xymatrix@R=.5pc@C=6pc{
A \ar[rd]^{f_1} \ar[d] & \\
A+E \ar[r]|{\,(\lcopair{f|g})_2\,} 
  \ar@{}[ur]|(.3){=} \ar@{}[dr]|(.3){=} & B+E \\
E \ar[u] \ar[ru]_{g_2} & \\
}
$$
Moreover, the rule \rul{effect} expresses the fact that,
when $\M X=X+E$, two modifiers coincide as soon as they coincide 
on ordinary values and on exceptions,
whereas for a monad generally the morphisms $\eta_X\colon X\to \M X$
and $\M \copa_X\colon \M \empt\to \M X$ do not form a coproduct. 
For each set $\Ename$ of \emph{exception names}, 
additional grammar and rules for the logic $\Log_\exc$
are given in Fig.~\ref{fig:log-exc}.
We extend the grammar 
with a type $V_T$, a propagator $\tagg_T^\acc:V_T\to\empt$
and a catcher $\untag_T^\modi:\empt\to V_T$ for each exception name $T$,
and we also extend its rules. 
The logic $\Log_\exc$ obtained performs the \emph{core} operations 
on exceptions: 
the \emph{tagging} operations 
encapsulate an ordinary value into an exception, 
and the \emph{untagging} operations 
recover the ordinary value which has been encapsulated in an exception.
This may be generalized by assuming a hierarchy of exception names  
\cite{DDR13-exc}. 
In Fig.~\ref{fig:log-exc},
the rule $\rul{exc-coprod-u}$ is a decorated rule for coproducts.
It asserts that two functions without argument 
coincide as soon as they coincide on each exception.  
Together with the rule \rul{effect} this implies that two functions 
coincide as soon as they coincide on their argument and on each exception.
\begin{figure}[!ht]   
\renewcommand{\arraystretch}{1.7}
$$ \begin{array}{|l|} 
\hline
\mbox{Additional (left) coproduct rules} \\ 
\quad \rul{l-copair}\quad 
\dfrac{ f_1^\cst\colon  A_1 \to B \quad f_2^\modi\colon  A_2 \to B}
  {\lcopair{f_1|f_2}^\modi\colon  A_1\!+\!A_2 \to B \quad 
  \lcopair{f_1|f_2} \circ \copr_1 \eqw  f_1 \quad 
   \lcopair{f_1|f_2} \circ \copr_2 \eqs f_2 } \\
\quad \rul{l-copair-u}\quad 
\dfrac{g^\modi\scolon  A_1\!\!+\!\!A_2 \sto B \squad 
  f_1^\cst\scolon  A_1 \sto B \squad f_2^\modi\scolon A_2 \sto B \squad 
  g\circ \copr_1 \eqw f_1 \squad g\circ \copr_2 \eqs f_2}
  {g \eqs \lcopair{f_1|f_2}}    \\
\hline
\mbox{Effect rule} \\ 
\quad \rul{effect}\quad 
\dfrac{f,g\colon A \to B \quad f\eqw g \quad 
  f\circ \copa_A \eqs g\circ \copa_A}
  {f\eqs g}  \\ 
\hline
\mbox{Additional grammar (for each $T\in\Ename$)} \\ 
\quad \textrm{Types: } V_T  \\ 
\quad \textrm{Terms: } \tagg_T^\cst\colon V_T\to \empt \mid 
  \untag_T^\modi\colon \empt\to V_T \\ 
\hline
\mbox{Axioms (for each $T\in\Ename$)} \\ 
\quad \untag_T\circ \tagg_T \eqw \id_{V_T} \\
\quad \untag_T\circ\tagg_R\eqw\copa_{V_T}\circ\tagg_R 
  \mbox{ for each }R\ne T, \; R\in\Ename \\ 
\hline
\mbox{A specific coproduct rule} \\ 
\quad \rul{exc-coprod-u} \quad 
\dfrac{f,g\colon \empt\to B \quad \mbox{for all }T\!\in\! \Ename\; 
  f\circ \tagg_T \eqw g\circ \tagg_T}
  {f\eqs g} \\
\hline
\end{array} $$
\renewcommand{\arraystretch}{1}
\caption{From $\Log_\mon$ to $\Log_\exc$: 
additional features for the monad of exceptions} 
\label{fig:log-exc} 
\end{figure}
For each family of objects $(V_T)_{T\in\Ename}$ in $\cC$  
such that $E \cong \sum_{T\in\Ename}V_T $ 
we build a model $\cC_\exc$ of $\Log_\exc$,
which extends the model $\cC_\M$ of $\Log_\mon$
with functions for tagging and untagging the exceptions. 
The types $V_T$ are interpreted as the objects $V_T$ and  
the propagators $\tagg_T^\cst:V_T\to\empt$ as the coprojections 
from $V_T$ to $E$. Then the interpretation of each catcher 
$\untag_T^\modi:\empt\to V_T$ is the function $\untag_T:E\to V_T+E$ 
defined as the cotuple (or case distinction)
of the functions $f_{T,R}:V_R\to V_T+E$  
where $f_{T,T}$ is the coprojection of $V_T$ in $V_T+E$
and $f_{T,R}$ is made of $\tagg_R:V_T\to E$ followed by
the coprojection of $E$ in $V_T+E$ when $R\ne T$.
This can be illustrated, in an informal way, 
as follows: $\tagg_T$ encloses its argument $a$ in a box 
with name $T$, while $\untag_T$ opens every box with name $T$ to recover 
its argument 
and returns every box with name $R \ne T$ without opening it:
 $$ \xymatrix@C=3pc@R=.5pc{
a \ar[rr]^{\tagg_T} && *+[F-]{a} \ar@{}[r]_(.2){T} & \\ 
} \qquad 
\xymatrix@C=3pc@R=.5pc{
*+[F-]{a} \ar@{}[r]_(.2){T} \ar[rr]^{\untag_T} && a & \\ 
*+[F-]{a} \ar@{}[r]_(.2){R} \ar[rr]^{\untag_T} && 
  *+[F-]{a} \ar@{}[r]_(.2){R} & \\
}$$

\subsection{A decorated logic for the comonad of states} 
\label{ssec:state}

Let us assume that there is in $\cC$ 
a distinguished object $S$ called the \emph{object of states}.
The \emph{comonad of states} on $\cC$ is the comonad $(\coM,\eps,\delta)$  
with endofunctor $\coM X=X\times S$, 
its counit $\eps$ is made of the projections 
$\eps_X\colon X\times S\to X$ and its comultiplication $\delta$ 
is defined by 
 $\delta_X=\pair{\id_{X\times S},\pr_X}
\colon X\times S \to (X\times S)\times S $ where 
$\pr_X\colon X\times S \to S$ is the projection. 
This comonad is sometimes called the \emph{product comonad};
it is different from the \emph{costate comonad} or
\emph{store comonad} with endofuntor $ \coM A=S\times A^S$ \cite{GJ12}.
As in Section~\ref{ssec:comonad}, 
the category $\cC$ with the comonad of states 
provides a model $\cC_\coM$ of the decorated logic $\Log_\comon$. 

For this specific comonad, 
it is possible to extend the logic $\Log_\comon$ as $\Log_\sta$,
called the \emph{decorated logic for states}, 
so that $\cC_\coM$ can be extended as a model $\cC_\sta$ of $\Log_\sta$. 
In Fig.~\ref{fig:log-state},
the rule $\rul{st-prod-u}$ is a decorated rule for coproducts.
It asserts that two functions without result 
coincide as soon as they coincide when observed at each location.  
Together with the rule \rul{st-effect} this implies that two functions 
coincide as soon as they return the same value and coincide on each location.
\begin{figure}[!ht]   
\renewcommand{\arraystretch}{1.7}
$$ \begin{array}{|l|} 
\hline
\mbox{Additional (left) product rules} \\ 
\quad \rul{l-pair}\quad 
\dfrac{ f_1^\acc \colon A \to B_1 \quad f_2^\modi \colon  A \to B_2}
  {\lpair{f_1,f_2}^\modi \colon A \to B_1\!\times\! B_2 \quad 
  \pr_1 \circ \lpair{f_1,f_2} \eqw f_1 \quad 
  \pr_2 \circ \lpair{f_1,f_2} \eqs f_2 } \\
\quad \rul{l-pair-u}\quad 
\dfrac{g^\modi \scolon A \to B_1\!\times\! B_2 \squad 
  f_1^\cst \scolon A \sto B_1 \squad f_2^\modi \scolon A \sto B_2 \squad 
  \pr_1 \circ g \eqw f_1 \squad \pr_2 \circ g \eqs f_2}
  {g \eqs \lpair{f_1,f_2}}    \\
\hline
\mbox{Effect rule} \\ 
\quad \rul{st-effect-u}\quad 
\dfrac{f,g\colon A \to B \quad f \eqw g \quad 
  \pa_A \circ f \eqs \pa_A \circ g}
  {f\eqs g}  \\ 
\hline
\mbox{Additional grammar (for each $T\in\Loc$)} \\ 
\quad \textrm{Types: } V_T  \\ 
\quad \textrm{Terms: } \lookup_T^\acc\colon \unit \to V_T \mid 
  \update_T^\modi\colon V_T \to\unit \\ 
\hline
\mbox{Axioms (for each $T\in\Loc$)} \\ 
\quad \lookup_T \circ \update_T \eqw \id_{V_T} \\
\quad \lookup_R \circ \update_T \eqw \lookup_R \circ \pa_{V_T}  
  \mbox{ for each } R\ne T, \; R\in\Loc \\ 
\hline
\mbox{A specific product rule} \\ 
\quad \rul{st-prod-u} \quad 
\dfrac{ f,g\colon A \to \unit \quad \mbox{for all }T\!\in\! \Loc\; 
  \lookup_T \circ f \eqw \lookup_T \circ g}
  {f\eqs g} \\
\hline
\end{array} $$
\renewcommand{\arraystretch}{1}
\caption{From $\Log_\comon$ to $\Log_\sta$: 
additional features for the comonad of states} 
\label{fig:log-state} 
\end{figure}
For each family of objects $(V_T)_{T\in\Loc}$ in $\cC$  
such that $S \cong \prod_{T\in\Loc}V_T $ 
we build a model $\cC_\sta$ of $\Log_\sta$,
which extends the model $\cC_\coM$ of $\Log_\comon$
with functions for looking up and updating the locations. 
The types $V_T$ are interpreted as the objects $V_T$ and  
the accessors $\lookup_T^\acc:\unit\to V_T$ as the projections 
from $S$ to $V_T$. Then the interpretation of each modifier 
$\update_T^\modi:V_T\to\unit$ is the function $\update_T:V_T\times S \to S$ 
defined as the tuple 
of the functions $f_{T,R}:V_T\times S \to V_R$  
where $f_{T,T}$ is the projection of $V_T\times S$ to $V_T$
and $f_{T,R}$ is made of the projection of $V_T\times S$ to $S$
followed by $\lookup_R:S\to V_R$ when $R\ne T$. 

\section{Breaking the duality}
\label{sec:break}

In Section~\ref{ssec:cond-pair} we discuss the behaviour of 
conditionals and $n$-ary operations with respect to effects. 
In Section~\ref{ssec:break-state} 
the decorated logic for states is extended under the assumption 
that $\cC$ is distributive,
and we easily get Theorem~\ref{theorem:state} about 
conditionals and sequential pairs. 
In Section~\ref{ssec:break-exc} 
the decorated logic for exceptions is extended,
in a way which is \emph{not} dual to the extension for states. 
It happens that catchers do not have ``good'' properties with respect to 
conditionals and sequential pairs. 
Thus, we define a new language, 
called the \emph{programmer's language} for exceptions, 
in order to encapsulate the catchers in {\em try-catch} blocks.
This corresponds to the usual way to deal with exceptions in a 
computer language.  
Then, under the assumption that $\cC$ satisfies a limited form of extensivity, 
we get Theorem~\ref{theorem:exc} about 
conditionals and sequential pairs for the programmer's language
for exceptions. 
Note that distributivity and extensivity are related notions \cite{CLW93},
and that each of them breaks the duality between products and coproducts. 

\subsection{Effects: conditionals and sequential pairs} 
\label{ssec:cond-pair}

When there are effects, for a binary operation $f:A_1\times A_2\to B$, 
the fact that the substitution of terms 
$a_1,a_2$ in $f$ is $f\circ \pair{a_1,a_2}$ is no more valid: 
indeed, because of the effects, the result of applying $f$ to 
$a_1,a_2$ may depend on the evaluation order of $a_1$ and $a_2$. 
This means that there is no ``good'' pair $\pair{a_1,a_2}$. 
However, it is usually possible to give a meaning 
to ``$f(a_1,a_2)$ with $a_1$ evaluated before $a_2$'', or 
symmetrically to ``$f(a_1,a_2)$ with $a_2$ evaluated before $a_1$''.
This means that there are ``good'' tuples 
$\pair{a_1\circ v_1,v_2}$ and $\pair{w_1,a_2\circ v_2}$ when 
$v_1$, $v_2$, $w_1$ and $w_2$ are either identities or projections. 
Then, for ``$a_1$ before $a_2$'' one can use 
$\pair{\pr_1,a_2\circ\pr_2}\circ \pair{a_1,\id_A}$
(which coincides with $\pair{a_1,a_2}$ when this pair does exist).
Such a notion of {\em sequential pair} is studied in \cite{DDR11-seqprod}, 
where several effects are considered. 
There are other ways to formalize the fact of 
first evaluating $a_1$ then $a_2$:
for instance by using a strong monad \cite{Moggi91} 
or productors \cite{Tate13};
a comparison with strong monads is done in \cite{DDR11-seqprod}.
$$ \xymatrix@R=.5pc@C=3pc{
&& A_1 && A_1 \\
A \ar[rru]^{a_1} \ar[rrd]_{\id_A} \ar[rr]|{\,\pair{a_1,\id_A}\,} && 
  A_1\times A \ar[u] \ar[d] \ar[rru]^{\pr_1} \ar[rrd]_{a_2\circ\pr_2} 
    \ar[rr]|{\,\pair{\pr_1,a_2\circ\pr_2}\,} 
    \ar@{}[ull]|(.3){=} \ar@{}[dll]|(.3){=} && 
  A_1\times A_2 \ar[u] \ar[d] \ar@{}[ull]|(.3){=} \ar@{}[dll]|(.3){=} 
    \ar[r]^f & B \\
&& A && A_2 \\
}$$ 
For conditionals, the fact that 
{\tt if b then f else g} corresponds to $\copair{f|g}\circ b$
usually remains valid when there are effects.

In this paper, we consider a \emph{language with effects} as
a language with (at least) two levels of terms, 
similar to the \emph{values} and \emph{computations} in \cite{Moggi91}: 
the \emph{pure} terms form the morphisms of a category $\cC$ 
with finite products and coproducts 
and the \emph{general} terms form a larger category $\cC^\gen$  
with the same objects as $\cC$. 
\begin{definition}
\label{defi:cond}
A language with effects {\em is compatible with conditionals} 
when the category $\cC$ has finite coproducts 
and when the copairs of general terms are defined:
for each $f_1^\gen\colon A_1 \to B$ and $f_2^\gen\colon A_2 \to B$ 
there exists a unique $\copair{f_1|f_2}^\gen\colon A_1+A_2 \to B$
such that $\copair{f_1|f_2} \circ \copr_1 =  f_1 $
and $\copair{f_1|f_2} \circ \copr_2 =  f_2$
(where $\copr_1^\pure\colon A_1 \to A_1+A_2 $ and 
$\copr_2^\pure\colon A_2 \to A_1+A_2$ are the coprojections).
\end{definition}
\begin{definition}
\label{defi:seqprod}
Let $\geqaux$ be a relation between pure terms and general terms 
which is the equality when both terms are pure. 
A language with effects {\em is compatible with sequential pairs,
with respect to $\geqaux$},  
when the category $\cC$ has finite products 
and when the left and right pairs of a pure term and a general term are defined,
in the following sense: 
for each $f_1^\pure\colon A \to B_1$ and $f_2^\gen\colon A \to B_2$ 
there exists a unique $\lpair{f_1,f_2}^\gen\colon A \to B_1\times B_2$
such that $\pr_1\circ\lpair{f_1,f_2} \geqaux f_1 $ and  
$\pr_2\circ\lpair{f_1,f_2} = f_2$ 
(where $\pr_1^\pure\colon  B_1\times B_2\to B_1 $ and 
$\pr_2^\pure\colon  B_1\times B_2\to B_2 $ are the projections),
and symmetrically for $\rpair{f_1^\gen,f_2^\pure}^\gen$.
\end{definition}

\subsection{States} 
\label{ssec:break-state}

Let us assume that the category $\cC$ is distributive. 
This means that the canonical morphism from 
$A\times B+A\times C$ to $A\times (B+C)$ is an isomorphism. 
Then we get new decorations for the coproduct rules, because 
the copair of two modifiers now exists, see Fig.~\ref{fig:log-state-plus}. 
The interpretation of the modifier $\copair{f|g}^\modi$,
when both $f^\modi$ and $g^\modi$ are modifiers, 
is the composition of the inverse of 
the canonical morphism $(A_1\times S)+(A_2\times S)\to (A_1+A_2)\times S$
with $\copair{f_2|g_2}\colon (A_1\times S)+(A_2\times S)\to B\times S$.
\begin{figure}[!ht]   
\renewcommand{\arraystretch}{2}
$$ \begin{array}{|l|} 
\hline
\mbox{Additional coproduct rules} \\
\quad \rul{copair}  \quad 
\dfrac{ f_1^\modi\colon  A_1 \to B \quad f_2^\modi\colon  A_2 \to B}
  {\copair{f_1|f_2}^\modi\colon  A_1\!+\!A_2 \to B  \quad 
   \copair{f_1|f_2} \circ \copr_1 \eqs  f_1 \quad 
   \copair{f_1|f_2} \circ \copr_2 \eqs  f_2  } \quad \\
\quad \rul{copair-u}  \; 
\dfrac{f_1^\modi\scolon  A_1 \sto B \squad 
  f_2^\modi\scolon  A_2 \sto B \squad 
  g^\modi\scolon  A_1\!\splus\!A_2 \sto B \squad 
  g\circ \copr_1 \eqs f_1 \squad 
  g\circ \copr_2 \eqs f_2 }
  {g \eqs \copair{f_1|f_2} } \\
\hline
\end{array} $$
\renewcommand{\arraystretch}{1}
\caption{From $\Log_\sta$ to $\Log_\sta^+$: additional rules for states, 
when $\cC$ is distributive} 
\label{fig:log-state-plus} 
\end{figure}
\begin{theorem}
\label{theorem:state}
Let us consider the language for states  
with modifiers as general terms (decoration $g=2$).   
When the category $\cC$ is distributive, 
the language for states  
is compatible with conditionals and sequential pairs 
with respect to $\eqw$.
\end{theorem}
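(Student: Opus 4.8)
The plan is to verify directly the two conditions of Definitions~\ref{defi:cond} and~\ref{defi:seqprod} for the language of states, using as ``general terms'' the modifiers (decoration $2$) and as the auxiliary relation $\geqaux$ the weak equality $\eqw$ (which is indeed the equality when both terms are pure, since a pure term converted to a modifier and compared weakly collapses to ordinary equality in $\cC$ via $\eps_B\circ(-)$). Compatibility with conditionals requires, for modifiers $f_1^\modi\colon A_1\to B$ and $f_2^\modi\colon A_2\to B$, a unique $\copair{f_1|f_2}^\modi\colon A_1+A_2\to B$ with $\copair{f_1|f_2}\circ\copr_1\eqs f_1$ and $\copair{f_1|f_2}\circ\copr_2\eqs f_2$. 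Existence is exactly the rule $\rul{copair}$ of $\Log_\sta^+$ from Fig.~\ref{fig:log-state-plus}, which is available precisely because $\cC$ is distributive; uniqueness is the rule $\rul{copair-u}$ of that same figure. So the conditional part is essentially immediate once we invoke $\Log_\sta^+$.

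Next I would treat sequential pairs. For $f_1^\pure\colon A\to B_1$ and $f_2^\modi\colon A\to B_2$ we must produce a unique $\lpair{f_1,f_2}^\modi\colon A\to B_1\times B_2$ with $\pr_1\circ\lpair{f_1,f_2}\eqw f_1$ and $\pr_2\circ\lpair{f_1,f_2}\eqs f_2$. Here I would use the rule $\rul{l-pair}$ from Fig.~\ref{fig:log-state} (additional left product rules), noting that a pure term is in particular an accessor via the conversion $\pure\To\acc$, so $\lpair{f_1,f_2}^\modi$ with $f_1^\acc$ is well-formed and satisfies $\pr_1\circ\lpair{f_1,f_2}\eqw f_1$ and $\pr_2\circ\lpair{f_1,f_2}\eqs f_2$ by that rule. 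Uniqueness follows from $\rul{l-pair-u}$: given any $g^\modi$ with $\pr_1\circ g\eqw f_1$ and $\pr_2\circ g\eqs f_2$, the rule yields $g\eqs\lpair{f_1,f_2}$. The symmetric statement for $\rpair{f_1^\modi,f_2^\pure}^\modi$ is handled by the mirror-image (right) product rules, which the paper states are symmetric.

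The main subtlety — and the one place where I would be careful rather than routine — is checking that $\geqaux{}={}\eqw$ really is ``the equality when both terms are pure,'' so that Definition~\ref{defi:seqprod} is being instantiated legitimately, and checking the decoration bookkeeping: $\rul{l-pair}$ as stated requires its first argument to be an accessor, and we only have a pure term, so I must explicitly pass through the conversion rule $\pure\To\acc$ and confirm the resulting constraints ($\eqw$ on the first projection, $\eqs$ on the second) match Definition~\ref{defi:seqprod} verbatim. Once these decoration coercions are pinned down, the proof is just an assembly of the rules $\rul{copair}$, $\rul{copair-u}$ (from $\Log_\sta^+$) and $\rul{l-pair}$, $\rul{l-pair-u}$ together with their right-handed symmetric versions (from $\Log_\sta$). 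I expect no genuine obstacle: distributivity of $\cC$ is exactly what was needed to license $\Log_\sta^+$, and everything else is already in the logic.
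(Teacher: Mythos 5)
Your proposal is correct and follows exactly the paper's own argument: conditionals come from the $\rul{copair}$/$\rul{copair-u}$ rules of $\Log_\sta^+$ (licensed by distributivity of $\cC$), and sequential pairs come from the $\rul{l-pair}$/$\rul{l-pair-u}$ rules of $\Log_\sta$ together with their right-handed counterparts. Your extra care about the $\pure\To\acc$ conversion and about $\eqw$ collapsing to equality on pure terms is sound bookkeeping that the paper leaves implicit.
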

\begin{proof}
The left and right pairs of an accessor and a modifier 
in the logic $\Log_\sta$ (Fig.~\ref{fig:log-state}) 
provide sequential pairs. 
The rules for copairs 
in the logic $\Log_\sta^+$ (Fig.~\ref{fig:log-state-plus}) 
provide conditionals.
\end{proof}
\begin{remark}
\label{remark:state}
An advantage of 
using the comonad of states $X\times S$ rather than the 
usual monad of states $(X\times S)^S$ is that 
sequential pairs for states are defined without any new ingredient: 
no kind of strength, in contrast with the approach using 
the strong monad of states $(A\times S)^S$ \cite{Moggi91},
and no ``external'' decoration for equations, in contrast with 
\cite{DDR11-seqprod}. 
\end{remark}

\subsection{Exceptions} 
\label{ssec:break-exc}

Since we do not assume that the category $\cC$ is codistributive 
we do not get pairs of catchers 
in a way dual to the copairs of modifiers for states.
In fact the decorated logic $\Log_\exc$ for exceptions, 
with the core operations for tagging and untagging, 
remains \emph{private}, while 
there is a \emph{programmer's} language, 
which is \emph{public}, with no direct access to the catchers.
The programmer's language for exceptions
provides the operations for \emph{raising} and \emph{handling} exceptions,
which are defined in terms of the core operations. 
This language does not include the private tagging and untagging operations,
but the public $\throw$ and $\try/\catch$ constructions,
which are defined in terms of $\tagg$ and $\untag$. 
It has no catcher: 
the only way to catch an exception is by using a $\try/\catch$ expression, 
which itself propagates exceptions. 
This corresponds to the usual mechanism of exceptions 
in programming languages. 
For the sake of simplicity we assume that only one type of exception is
handled in a $\try/\catch$ expression,
the general case is treated in Appendix~\ref{app:catch}. 

The main ingredients for building the programmer's language 
from the core language are the coproducts $A\cong A+\empt$ 
and the fact of decorating the composition:
in addition to the basic composition ``$\circ$'' 
we introduce a second composition, 
called the {\it propagator composition} and denoted ``$\odot$'', 
subject to the rules in Fig.~\ref{fig:log-exc-plus}. 
Both compositions ``$\circ$'' and ``$\odot$'' coincide on propagators, 
but they are interpreted differently when a propagator is composed 
with a modifier. 
This is an instance of the two ways to compose oblique morphisms 
related to an adjunction \cite{Munch14}. 
\begin{remark}
\label{remark:composition}
In fact, this new composition can be defined for any monad,
but until now it has not been needed: 
Let $f^\cst:A\to B$ and $k^\modi:B\to C$ then 
$(k \odot f)^\cst: A \to C$ is interpreted as $k_2\circ f_1:A \to \M C$;  
then it can be checked that 
$f \eqw g$  if and only if $ \id \odot f\eqs \id \odot g$. 
In contrast, $(k \circ f)^\modi: A \to C$ is interpreted as 
$k_2\circ f_2 = k_2\circ \mu_B\circ \M f_1:\M A \to \M C$. 
Dually, such a new composition could be defined for any comonad. 
\end{remark}
Now, we come back to exceptions and we define the 
$\throw$ and $\try/\catch$ constructions. 
\begin{definition}
\label{defi:exc}
For each type $B$ and each exception name $T$, the propagator
$ \throw_{B,T}^\cst$ is: 
 $$ \throw_{B,T}^\cst = \copa_B^\pure \circ \tagg_T^\cst \colon V_T\to B $$
For each each propagator $f^\cst\colon A\to B$, each exception name $T$ and 
each propagator $g^\cst\colon V_T\to B$, 
the propagator $\try(f)\catch(T\To g)^\cst$ is defined 
as follows, in two steps:
\renewcommand{\arraystretch}{1.3}
$$ \begin{array}{l}
\catch(T\To g)^\modi = 
  \copair{\; g^\cst \;|\; \copa_B^\pure \;}^\cst \circ\untag_T^\modi 
  \colon  \empt\to B \\ 
\try(f)\catch(T\To g)^\cst = 
  \lcopair{\; \id_B \;|\; \catch(T\To g) \;}^\modi \odot f^\cst 
  \colon A\to B \\
\end{array}$$
\renewcommand{\arraystretch}{1}
\end{definition}
This means that raising an exception with name $T$ in a type $B$ 
consists in tagging the given ordinary value (in $V_T$) 
as an exception and coerce it to~$B$. 
For handling an exception, 
the intermediate expression $\catch(T\To g)$ is a private catcher 
while the expression $\try(f)\catch(T\To g)$ is a public propagator: 
the propagator composition ``$\odot$'' 
prevents this expression from catching exceptions 
with name $T$ which might have been raised before 
the $\try(f)\catch(T\To g)$ block is considered. 
The definition of $\try(f)\catch(T\To g)$ corresponds to 
the Java mechanism for exceptions \cite{java,Jacobs01}, 
which may be described by the control flow in Fig.~\ref{fig:flow},
where ``$\isexc$'' means ``\emph{is this value an exception?}'',
an \emph{abrupt} termination returns an uncaught exception 
and a \emph{normal} termination returns an ordinary value.
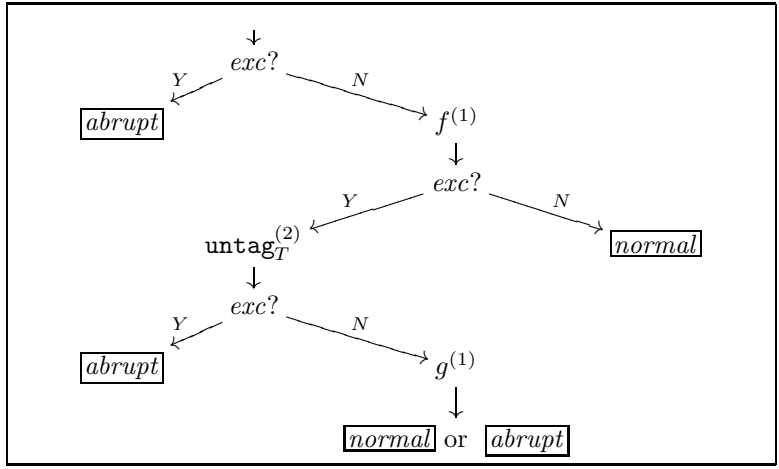
\begin{figure}[!ht]   
$$ \begin{array}{|c|}
\hline
\qquad \xymatrix@C=.8pc@R=.6pc{
& \ar[d] && \\
& \isexc \ar[ld]_{Y}\ar[rd]^{N} && \\
\abr && f^\cst \ar[d] & \\
&& \isexc \ar[ld]_{Y}\ar[rd]^{N} & \\
& \untag_T^\modi \ar[d] && \nor \\
& \isexc \ar[ld]_{Y}\ar[rd]^{N} && \\ 
\abr && g^\cst \ar[d] & \\
&& \txt{\nor \mbox{ or } \abr} \\
} \qquad \\ 
\hline
\end{array} $$
\caption{The control flow for $\try(f)\catch(T\To g)$} 
\label{fig:flow} 
\end{figure}
Now, let us assume that the category $\cC$ is {\em extensive
with respect to $E$}, by which we mean that  
the pullbacks of the coprojections $\copr_1\colon B\to B+E$ 
and $\copr_2\colon E\to B+E$ 
along an arbitrary morphism $f\colon A\to B+E$ exist 
and form a coproduct $A=\cD_f+\cE_f$: 
$$ \xymatrix@R=1pc@C=3pc{
\cD_f \ar[d]_{i_f} \ar[r]^{f_\normal} & B \ar[d]^{\copr_1} \\
A \ar[r]^{f_1} & B+E \\
\cE_f \ar[u]^{j_f} \ar[r]^{f_\abrupt} & E \ar[u]_{\copr_2} \\
}$$
Informally, this implies that any morphism $f_1\colon A\to B+E$ can be seen 
as a partial morphism form $A$ to $B$ with domain of definition 
the vertex $\cD_f$ of the pullback on $\copr_1$ and $f_1$.  
We get a decorated logic $\Log_\exc^+$ by extending $\Log_\exc$ 
with the propagator composition 
and with left pairs (and right ones, omitted here)
as in Fig.~\ref{fig:log-exc-plus}. 
\begin{figure}[!ht]   
\renewcommand{\arraystretch}{1.7}
$$ \begin{array}{|l|} 
\hline
\mbox{Propagator composition} \\ 
\quad \rul{prop-comp} \quad 
  \dfrac{f^\cst\colon A\to B \quad g^\modi\colon B\to C}
    {(g\odot f)^\cst \colon A\to C \quad 
    g\odot f \eqw g\circ f} \quad \\ 
\hline
\mbox{Additional (left) product rules} \\ 
\quad \rul{l-pair}  \quad 
  \dfrac{f_1^\pure\colon A \to B_1 \quad f_2^\cst\colon A \to B_2}
    {\lpair{f_1,f_2}^\cst\colon  A\to B_1\stimes B_2 \quad
    \pr_1\circ\lpair{f_1,f_2} \leqaux f_1 \quad 
    \pr_2\circ\lpair{f_1,f_2} \eqw f_2 }  \\
\quad \rul{l-pair-u} \quad 
  \dfrac{f_1^\pure\scolon A\!\sto\! B_1 \squad
    f_2^\cst\scolon A\!\sto\! B_2 \squad 
    g^\cst\scolon A\!\sto\! B_1\stimes B_2 \squad 
    \pr_1\circ g\leqaux f_1 \squad 
    \pr_2\circ g\eqs f_2 }
    {g \eqs \lpair{f_1,f_2}} \\
\hline 
\end{array}$$
\renewcommand{\arraystretch}{1}
\caption{From $\Log_\exc$ to $\Log_\exc^+$: additional rules for exceptions, 
when $\cC$ is extensive wrt $E$}
\label{fig:log-exc-plus} 
\end{figure}
We define a relation $\geqaux$ between pure terms and propagators,  
which can be seen as (a restriction of) the usual order 
between partial functions. 
\begin{definition}
\label{defi:exc-order}
Let $v^\pure\colon A\to B$ be a pure term 
and $f^\cst\colon A\to B$ a propagator, 
corresponding respectively to 
$v_0\colon A \to B$ and $f_1\colon A\to B+E$ in $\cC$.  
Then $v^\pure\geqaux f^\cst$ if and only if
the restrictions of $v_0$ and $f_1$ to the domain of definition of $f_1$ 
coincide, which means, if and only if $v_0\circ i_f=f_{\normal}\colon \cD_f\to B$.
\end{definition}
Now we can interpret the left pair of a pure term and a propagator.
\begin{definition}
\label{defi:exc-lpair}
Let $v^\pure\colon A\to B_1$ be a pure term 
and $f^\cst\colon A\to B_2$ a propagator, 
corresponding respectively to 
$v_0\colon A \to B_1$ and $f_1\colon A\to B_2+E$ in $\cC$.  
Let $h_\normal=\pair{v\circ i_f, f_\normal} \colon \cD_g \to B_1\times B_2$
and $h_\abrupt=f_\abrupt\colon \cE_g \to E$,
then let $h= h_\normal+h_\abrupt \colon A \to (B_1\times B_2)+E$ in $\cC$. 
The morphism $h$ in $\cC$ corresponds to a propagator 
$h^\cst\colon A\to B_1\times B_2$,
which is the interpretation of the \emph{left pair} $\lpair{v,f}^\cst$
of $v^\pure$ and $f^\cst$. 
\end{definition}
It is easy to check that indeed $h^\cst$ 
satisfies the properties required of left pairs in Fig.~\ref{fig:log-exc-plus}. 
The right pair of a propagator and a pure term is defined in a symmetric way.
It can easily be checked that the core language for exceptions 
with catchers as general terms (decoration $g=2$) is not compatible 
with conditionals and sequential pairs 
(with respect to any relation $\geqaux$).
\begin{theorem}
\label{theorem:exc}
Let us consider the programmer's language for exceptions 
with propagators as general terms (decoration $g=1$).   
When the category $\cC$ is extensive with respect to $E$, 
the programmer's language for exceptions 
is compatible with conditionals and sequential pairs 
with respect to $\geqaux$ as in Definition~\ref{defi:exc-order}.
\end{theorem}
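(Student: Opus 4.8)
The plan is to read the two required properties off rules that are already part of $\Log_\exc^+$, and then to check that those rules hold in the model built from a category $\cC$ that is extensive with respect to $E$, taking the propagators as general terms. One preliminary observation streamlines every uniqueness clause below: on propagators the strong and weak equalities coincide. Syntactically this is rules $\rul{strong-weak}$ and $\rul{weak-strong}$ with $d,d'\leq1$; semantically it holds because a propagator $f^\cst\colon A\to B$ is interpreted by a single morphism $f_1\colon A\to B+E$ in $\cC$, from which $f_2=\copair{f_1\,|\,\copr_2}$ is recovered. Hence ``unique up to $\eqw$'' --- which is what Definitions~\ref{defi:cond} and~\ref{defi:seqprod} ask, the general terms forming here the Kleisli category of the exception monad --- is the same as ``unique up to $\eqs$''.

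For conditionals, given propagators $f_1^\cst\colon A_1\to B$ and $f_2^\cst\colon A_2\to B$, rule $\rul{copair}$ of $\Log_\mon$ (Fig.~\ref{fig:log-mon}) yields the propagator $\copair{f_1|f_2}^\cst\colon A_1+A_2\to B$ with $\copair{f_1|f_2}\circ\copr_i\eqs f_i$, and $\rul{copair-u}$ gives uniqueness up to $\eqs$, hence up to $\eqw$. In the model this is just the universal property of the coproduct $A_1+A_2$ in $\cC$ applied to $(f_1)_1,(f_2)_1\colon A_i\to B+E$; since the coprojections $\copr_i$ are pure, the two compositions $\circ$ and $\odot$ agree here, so nothing about the propagator composition needs checking and extensivity is not used for this part.

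For sequential pairs, the left pair $\lpair{v,f}^\cst\colon A\to B_1\times B_2$ of a pure term $v^\pure\colon A\to B_1$ and a propagator $f^\cst\colon A\to B_2$ is exactly the term constructed in Definition~\ref{defi:exc-lpair}, which uses extensivity through the coproduct decomposition $A=\cD_f+\cE_f$ of the source of $f_1$; the right pair $\rpair{f,v}^\cst$ is symmetric. I would then verify the rules of Fig.~\ref{fig:log-exc-plus} in the model by a short computation with the pullback squares, using that post-composition by a pure (total) map $p_0$ turns the partial morphism $\phi_1$ into $(p_0+\id_E)\circ\phi_1$. With $h=h_\normal+h_\abrupt$ as in Definition~\ref{defi:exc-lpair}: post-composing $h$ by $\pr_2$ gives $(\pr_2\circ h)_1=(\pr_2+\id_E)\circ h=\copair{\copr_1\circ f_\normal\,|\,\copr_2\circ f_\abrupt}=f_1$, i.e.\ $\pr_2\circ\lpair{v,f}\eqw f$; post-composing by $\pr_1$ leaves the domain of definition equal to $\cD_f$, on which the normal part is $\pr_1\circ h_\normal=v\circ i_f$, which is precisely the relation $v^\pure\geqaux(\pr_1\circ\lpair{v,f})^\cst$ of Definition~\ref{defi:exc-order}.

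For uniqueness, suppose $g^\cst\colon A\to B_1\times B_2$ satisfies the relation of Definition~\ref{defi:exc-order} between $v$ and $\pr_1\circ g$ together with $\pr_2\circ g\eqw f$. From $(\pr_2\circ g)_1=(\pr_2+\id_E)\circ g_1=f_1$ and extensivity, the decomposition $A=\cD_g+\cE_g$ of $g_1$ must coincide with that of $f_1$: $\cD_g=\cD_f$, $\cE_g=\cE_f$, $i_g=i_f$, $j_g=j_f$, $\pr_2\circ g_\normal=f_\normal$ and $g_\abrupt=f_\abrupt$; and the relation between $v$ and $\pr_1\circ g$ forces $\pr_1\circ g_\normal=v\circ i_f$. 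Therefore $g_\normal=\pair{v\circ i_f,f_\normal}=h_\normal$ and $g_\abrupt=h_\abrupt$, so $g_1=h_1$, i.e.\ $g\eqw\lpair{v,f}$ (equivalently $g\eqs\lpair{v,f}$); this is rule $\rul{l-pair-u}$, and the right pairs are handled symmetrically. I expect this uniqueness step to be the main obstacle: it is the only place where the full force of ``extensivity with respect to $E$'' is needed, namely that $A=\cD_f+\cE_f$ is a genuine disjoint coproduct so that the two summands of $g_1$ are pinned down independently; everything else reduces to the universal properties of $+$ and $\times$ in $\cC$ already available in $\Log_\eqn$.
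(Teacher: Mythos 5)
Your proposal is correct and follows the same route as the paper: conditionals come from the $\rul{copair}$/$\rul{copair-u}$ rules of $\Log_\mon$ restricted to propagators, and sequential pairs come from the left/right pair rules of $\Log_\exc^+$, interpreted via the extensive decomposition $A\cong\cD_f+\cE_f$ of Definition~\ref{defi:exc-lpair}. The paper's proof is just these two sentences and leaves the model-level verification as ``easy to check''; you have carried out that verification (including the uniqueness clause via the pasting of pullbacks along $\pr_2+\id_E$), which is consistent with, and fills in, the paper's argument.
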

\begin{proof}
The left and right pairs of a pure term and a propagator 
in the logic $\Log_\exc^+$ (Fig.~\ref{fig:log-exc-plus}) 
provide sequential pairs. 
The rules for copairs 
in the logic $\Log_\mon$ (Fig.~\ref{fig:log-mon}) 
provide conditionals.
\end{proof}




\newpage
\appendix 
\section{The decorated logic for a monad} 
\label{app:logic}

\small
The decorated logic $\Log_\mon$ for a monad when $\cC$ is bicartesian. 
$$ \begin{array}{|l|} 
\hline
\mbox{Grammar} \\ 
\quad \textrm{Types: } t::= 
 A\mid B\mid \dots\mid 
 t\times t\mid \unit\mid
 t+t\mid\empt  \\
\quad \textrm{Terms: } f::=   \id_t\mid f\circ f\mid  
 \pair{f,f} \mid \pr_{t,t,1}\mid \pr_{t,t,2}\mid \pa_t \mid
 \copair{f|f}\mid \copr_{t,t,1}\mid\copr_{t,t,2}\mid\copa_t \quad \\
\quad \textrm{Decoration for terms: } \dec::= \pure \mid \cst \mid \modi \\ 
\quad \textrm{Equations: } e::=  f\eqs f  \mid f\eqw f \\  
\hline
\mbox{Conversion rules} \\ 
\quad \dfrac{f^\pure}{f^\acc} \qquad \dfrac{f^\acc}{f^\modi} 
  \qquad \dfrac{f^\dec\eqs g^\decp}{f\eqw g} \qquad
  \dfrac{f^\dec\eqw g^\decp}{f\eqs g} \mbox{ if } d,d'\leq 1 \\
\hline
\mbox{Equivalence rules} \\ 
\quad \rul{s-refl} \quad 
  \dfrac{f^\dec}{f \eqs f} \qquad
\rul{s-sym} \quad 
  \dfrac{f^\dec \eqs g^\decp}{g \eqs f}  \qquad
\rul{s-trans} \quad 
  \dfrac{f^\dec \eqs g^\decp \squad g^\decp \eqs h^\decpp}{f \eqs h}  \\ 
\quad \rul{w-refl} \quad 
  \dfrac{f^\dec}{f \eqw f} \qquad
\rul{w-sym} \quad 
  \dfrac{f^\dec \eqw g^\decp}{g^\decp \eqw f}  \qquad
\rul{w-trans} \quad 
  \dfrac{f^\dec \eqw g^\decp \squad g^\decp \eqw h^\decpp}{f \eqw h}  \\ 
\hline
\mbox{Categorical rules} \\ 
\quad \rul{id} \quad 
  \dfrac{A}{\id_A^\pure\colon A\to A } \qquad 
\rul{comp} \quad 
  \dfrac{f^\dec\colon A\to B \quad g^\decp\colon B\to C}
    {(g\circ f)^{(max(d,d'))} \colon A\to C}  \\
\quad \rul{id-source} \quad 
  \dfrac{f^\dec\colon A\to B}{f\circ \id_A \eqs f} \qquad 
\rul{id-target} \quad 
  \dfrac{f^\dec\colon A\to B}{\id_B\circ f \eqs f} \\
\quad \rul{assoc} \quad 
  \dfrac{f^\dec\colon A\to B \squad g^\decp\colon B\to C \squad 
     h^\decpp\colon C\to D}
  {h\circ (g\circ f) \eqs (h\circ g)\circ f}  \\
\hline
\mbox{Congruence rules} \\ 
\quad \rul{s-repl} \;
  \dfrac{f_1^\done\eqs f_2^\dtwo\colon A\to B \squad g^\dec\colon B\to C}
    {g\circ f_1 \eqs g\circ f_2 }  \quad
\rul{s-subs} \;
  \dfrac{f^\dec\colon A\to B \squad g_1^\done\eqs g_2^\dtwo\colon B\to C}
    {g_1 \circ f \eqs g_2\circ f } \\
\quad \rul{w-repl}  \;
  \dfrac{f_1^\done\eqw f_2^\dtwo\colon A\to B \squad g^\dec\colon B\to C}
    {g\circ f_1 \eqw g\circ f_2 }  \quad
\rul{w-subs} \;
  \dfrac{f^\pure\colon A\to B \squad g_1^\done \eqw g_2^\dtwo\colon B\to C}
    {g_1 \circ f \eqw g_2\circ f } \\
\hline
\mbox{Product rules} \\ 
\quad \rul{prod} \quad 
  \dfrac{B_1 \quad B_2 }
    {\pr_1^\pure\colon B_1\stimes B_2 \to B_1 \quad 
    \pr_2^\pure\colon B_1\stimes B_2 \to B_2}  \\ 
\quad \rul{pair}  \quad 
  \dfrac{ f_1^\pure\colon  A \to B_1 \quad f_2^\pure\colon  A \to B_2}
    {\pair{f_1,f_2}^\pure\colon  A\to B_1\stimes B_2 \quad
    \pr_1\circ\pair{f_1,f_2} \eqs f_1 \quad 
    \pr_2\circ\pair{f_1,f_2} \eqs f_2 }  \\
\quad \rul{pair-u} \quad 
  \dfrac{f_1^\pure\scolon A\!\sto\! B_1 \squad
    f_2^\pure\scolon A\!\sto\! B_2 \squad 
    g^\pure\scolon A\!\sto\! B_1\stimes B_2 
    \squad \pr_1\circ g\eqs f_1 \squad \pr_2\circ g\eqs f_2 }
    {g \eqs \pair{f_1,f_2}} \\
\quad \rul{final} \quad 
  \dfrac{A}{\pa_A^\pure\colon A\to \unit} \qquad
\rul{final-u} \quad 
  \dfrac{f^\pure \colon A\to \unit}{f \eqs \pa_A} \\  
\hline
\mbox{Coproduct rules} \\ 
\quad \rul{coprod} \quad 
  \dfrac{A_1 \quad A_2}
    {\copr_1^\pure\colon A_1\to A_1\splus A_2 \quad 
    \copr_2^\pure\colon A_2\to A_1\splus A_2 } \\ 
\quad \rul{copair} \; 
  \dfrac{ f_1^\done\colon  A_1 \to B \quad 
    f_2^\dtwo\colon  A_2 \to B}
    {\copair{f_1|f_2}^{(max(d_1,d_2))} \colon A_1\splus A_2 \to B \quad
    \copair{f_1|f_2} \circ \copr_1 \eqs  f_1 \quad 
    \copair{f_1|f_2} \circ \copr_2 \eqs  f_2  } (d_1,d_2\leq1) \\
\quad \rul{copair-u} \; 
  \dfrac{f_1^\done\scolon A_1 \sto B \squad 
    f_2^\dtwo\scolon A_2 \sto B \squad 
    g^\dec\scolon A_1\!\splus\! A_2 \sto B \squad 
    g\scirc \copr_1 \!\eqs\! f_1 \squad 
    g\scirc \copr_2 \!\eqs\! f_2 }
    {g \!\eqs\! \copair{f_1|f_2}} (d_1,d_2,d\!\leq\!1) \\  
\quad \rul{initial} \quad 
  \dfrac{B}{\copa_B^\pure \colon \empt\to B} \qquad
\rul{initial-u} \quad 
  \dfrac{f^\dec\colon \empt\to B}{f \eqw \copa_B} \\ 
\hline 
\end{array}$$

\normalsize

\section{Catching several exception names} 
\label{app:catch}

The handling process is easily extended to several exception names, as follows. 
The index $T_i$ is simplified as $i$: $V_i=V_{T_i}$, 
$\tagg_i=\tagg_{T_i}$, $\untag_i=\untag_{T_i}$. 
\begin{definition}
\label{defi:exc-multi}
For each each propagator $f^\cst\colon A\to B$, 
each list of exception names $(T_1,\dots,T_n)$ and 
each propagators $g_j^\cst\colon V_i\to B$ for $i=1,\dots,n$, 
the propagator 
$ \try(f)\catchn{T_1}{g_1}{T_n}{g_n}^\cst \colon A\to B$ 
is defined as follows, in two steps:
\begin{itemize}
\item the catcher $\catchn{T_1}{g_1}{T_n}{g_n}^\modi : \empt\to B$ 
is obtained by setting $i=1$ in the family of catchers 
$k_i^\modi = \catchn{T_i}{g_i}{T_n}{g_n} : \empt\to B$ 
(for $i=1,\dots,n$) which are defined recursively by: 
  $$ k_i^\modi \;=\;  
    \begin{cases} 
      \copair{\; g_n^\cst \;|\; \copa_B^\pure \;}^\cst \circ\untag_n^\modi &
             \mbox{ when } i=n \\
      \lcopair{\; g_i^\cst \;|\; k_{i+1}^\modi \;}^\modi \circ \untag_i^\modi & 
		   \mbox{ when } i< n \\
   \end{cases} $$
\item then the required propagator is: 
\renewcommand{\arraystretch}{0.5}
$$ \begin{array}{l}
\try(f)\catchn{T_1}{g_1}{T_n}{g_n}^\cst = 
\\ \qquad\qquad  \lcopair{\; \id_B \;|\; \catchn{T_1}{g_1}{T_n}{g_n} \;}^\modi 
\odot f^\cst   \colon A\to B \\
\end{array}$$
\renewcommand{\arraystretch}{0.5}
\end{itemize}
\end{definition}

The handling process is also easily extended to all exception names.
This \emph{catch-all} construction
is similar to \texttt{catch(...)} in C++ or to 
\texttt{(except, else)} in Python.
We add a catcher $\untag_{\all}^{\modi}\colon \empt\to \unit$
with the equations 
$$\untag_{\all} \circ \tagg_{T} \eqw \pa_T $$
for every exception name $T$,
which means that $\untag_{\all}$ catches exceptions of the form $\tagg_{T}(a)$
for every $T$ and forgets the value $a$.
\begin{definition}
\label{defi:exc-all} 
For each propagators $f^\cst\colon A\to B$ and $g^\cst\colon \unit\to B$,
the propagator $ \try(f)\catch(\all\To g)^\cst\colon A\to B $ is:
$$ \try(f)\catch(\all\To g)^\cst = 
  \lcopair{\; \id_B \,|\, g\circ \untag_{\all} \;}^\modi \odot f^\cst
\colon A\to B $$
\end{definition}
The interpretation of $ \try(f)\catch(\all\To g)$ is 
``\emph{handle the exception $e$ raised in $f$, if any, with $g$}''. 
This may be combined with other catchers, and 
every catcher following a {\it catch-all} is
syntactically allowed, but never executed.

\end{document}